\theoremstyle{plain}
\newtheorem{lemma}{Lemma}
\newcommand{\vect}[1]{\mathbf{#1}}
\def\tr{\mathrm{tr}}
\def\rank{\mathrm{rank}}
\def\Htran{\mbox{\tiny $\mathrm{H}$}}
\def\Ttran{\mbox{\tiny $\mathrm{T}$}}
\def\CN{\mathcal{N}_{\mathbb{C}}} 
\def\imagunit{\mathsf{j}} 
\def\m{\rm}
\def\sinc{\mathrm{sinc}}
\begin{document}

\title{Pilot Length Optimization with RS-LS Channel Estimation for Extremely Large Aperture Arrays 
\thanks{The work by M. Alıcıo\u{g}lu was supported by ASELSAN Inc., Ankara, T\"urkiye. The work by \"O. T. Demir was supported by 2232-B International Fellowship for Early Stage Researchers Programme funded by the Scientific and Technological Research Council of T\"urkiye. The work by E.~Bj\"ornson was supported by the  FFL18-0277 grant from the Swedish Foundation for Strategic Research.}}

\author{\IEEEauthorblockN{Mert Alıcıo\u{g}lu$^*$, \"Ozlem Tu\u{g}fe Demir$^*$, Emil Bj{\"o}rnson$^{\dagger}$}
\IEEEauthorblockA{{$^*$Department of Electrical-Electronics Engineering, TOBB University of Economics and Technology, Ankara, T\"urkiye
		} \\ {$^\dagger$Department of Computer Science, KTH Royal Institute of Technology, Kista, Sweden  
		} \\
		{Email: malicioglu@etu.edu.tr, ozlemtugfedemir@etu.edu.tr, emilbjo@kth.se}
}

}

\maketitle
\begin{abstract}
Extremely large aperture arrays can enable unprecedented spatial multiplexing in beyond 5G systems due to their extremely narrow beamfocusing capabilities. However, acquiring the spatial correlation matrix to enable efficient channel estimation is a complex task due to the vast number of antenna dimensions. Recently, a new estimation method called the ``reduced-subspace least squares (RS-LS) estimator'' has been proposed for densely packed arrays. This method relies solely on the geometry of the array to limit the estimation resources. In this paper, we address a gap in the existing literature by deriving the average spectral efficiency for a certain distribution of user equipments (UEs) and a lower bound on it when using the RS-LS estimator. This bound is determined by the channel gain and the statistics of the normalized spatial correlation matrices of potential UEs but, importantly, does not require knowledge of a specific UE's spatial correlation matrix. We establish that there exists a pilot length that maximizes this expression. Additionally, we derive an approximate expression for the optimal pilot length under low signal-to-noise ratio (SNR) conditions. Simulation results validate the tightness of the derived lower bound and the effectiveness of using the optimized pilot length.   
\end{abstract}
\begin{IEEEkeywords}
Extremely large aperture array, holographic massive MIMO, pilot length optimization, channel estimation.
\end{IEEEkeywords}

\vspace{-2mm}
\section{Introduction}

To facilitate effective beamforming and spatial multiplexing for user equipments (UEs), 5G base stations (BSs) are equipped with a multitude of antennas. Massive MIMO (multiple-input multiple-output) is the term used for this technology, which is characterized by having many more antennas than UEs to enhance spectral efficiency through spatial multiplexing~\cite{massivemimobook}. The advantages of massive MIMO are most pronounced when the antenna array dimensions are extremely large \cite{gao2013massive,martinez2014towards}.  However, the aperture size is typically limited by practical constraints, necessitating densely packed arrays when a very high number of antennas is desired. This involves the use of hundreds of antennas with antenna spacing possibly less than half the wavelength, referred to as \emph{extremely large aperture arrays (ELAA)} \cite{Bjornson2019d}, \emph{holographic MIMO} \cite{huang2020holographic, pizzo2020spatially}, or \emph{large intelligent surfaces} \cite{hu2018beyond}. By increasing the number of antennas in a given aperture, one can approach the asymptotic spatial degrees-of-freedom (DoF), reduce interference, and achieve massive array gains in all directions \cite{Bjornson2019d, hu2018beyond, pizzo2020asilomar, ramezani2023massive}.

One of the challenges associated with ELAA is channel estimation, which requires vast signal resources unless the channel structure and, particularly, sparsity are exploited. Such sparsity is both caused by clustered scattering and spatial oversampling in the array.
The spatial correlation matrix captures both effects and is used by the minimum mean squared error (MMSE) estimator, but acquiring this high-dimensional matrix is difficult in practice. If the scattering is extremely sparse, parametric channel structures can be leveraged \cite{an2023tutorial, ghermezcheshmeh2023parametric} and possibly compressed sensing, but measured channels are not that simple \cite{Gao2015b}.
If the channel structure is entirely unknown, the least squares (LS) estimator is conventionally used, but it performs much worse. Recently, the reduced-subspace least squares (RS-LS) estimator was introduced in \cite{demir2022channel}. It utilizes the spatial correlation created by the array geometry to estimate the channel in a lower-dimensional subspace where it resides, without the need for any UE-specific spatial correlation knowledge. RS-LS outperforms the LS estimator by depressing noise and requiring fewer pilot resources. Subsequently, its performance has been explored in the context of reconfigurable intelligent surface-aided communications \cite{long2023channel}. 

While the performance of the RS-LS channel estimator has been studied in terms of estimation errors, the analysis of the resulting spectral efficiency (SE) has been notably absent in the literature. In this paper, we address this gap by deriving the SE of a UE whose channel is estimated using the RS-LS estimator for multi-length pilot transmissions, with a particular emphasis on low signal-to-noise ratio (SNR) scenarios. We subsequently develop a novel expression for the average SE and a closed-form lower bound dependent only on the statistics of the normalized spatial correlation matrices. This allows us to determine the optimal pilot length for a given UE distribution and channel gain. Our analytical findings demonstrate that there exists a unique pilot length that maximizes the average SE. We derive a closed-form expression for optimal pilot length at low SNRs. The simulation results validate the tightness of the proposed expressions and demonstrate the near-optimal performance of the closed-form pilot length. These analytical results provide new insights into how channel estimation can be conducted efficiently without requiring complete spatial correlation knowledge in systems with ELAAs.

\vspace{-1mm}

\section{System and Channel Modeling}
\vspace{-1mm}

We examine the uplink operation of a single-antenna UE to a BS equipped with an ELAA. The array is constructed as a uniform planar array (UPA) consisting of $M$ antennas.  We specify that $M_{\rm H}$ antennas are arranged in each row, and $M_{\rm V}$ antennas are placed in each column, thereby resulting in a total of $M=M_{\rm H}M_{\rm V}$ antennas. The horizontal and vertical separation between these antennas is denoted as $\Delta$. 
Our main focus is on scenarios where there are hundreds of antennas with an antenna spacing that is less than half of the wavelength $\lambda$, as needed to approach the asymptotic DoF limits.

Using row-by-row numbering by $m\in[1,M]$, as exemplified in \cite{demir2022channel}, the position of the $m$th antenna relative to the origin 
is expressed as $\vect{u}_m = [ 0, \, \,\, i(m) \Delta,  \,\,\, j(m) \Delta]^{\Ttran}$. Here, $i(m) =\mathrm{mod}(m-1,M_{\m H})$ and $j(m) =\left\lfloor(m-1)/M_{\m H}\right\rfloor$
are the horizontal and vertical indices of antenna element $m$, respectively, while $\mathrm{mod}(\cdot,\cdot)$ denotes the modulus operation and $\lfloor \cdot \rfloor$ truncates the argument. If a plane wave is incident on the UPA from the azimuth angle $\varphi$ and elevation angle $\theta$, the array response vector is described as \cite[Sec.~7.3]{massivemimobook}
\begin{equation}\label{eq:array-response}
\vect{a}(\varphi,\theta) = \left[e^{\imagunit\vect{k}^{\Ttran}(\varphi,\theta)\vect{u}_1},\dots,e^{\imagunit\vect{k}^{\Ttran}(\varphi,\theta)\vect{u}_M}\right]^{\Ttran},
\end{equation}
where the wave vector is $\vect{k}(\varphi, \theta) = \frac{2\pi}{\lambda}\left[\cos(\theta) \cos(\varphi), \,\,\, \cos(\theta) \sin(\varphi), \,\,\, \sin(\theta)\right]^{\Ttran}$.

Let us denote the channel of an arbitrary single-antenna UE to the BS by $\vect{h}\in \mathbb{C}^M$. This channel typically comprises a superposition of multipath components that can be expanded as a continuum of plane waves \cite{Sayeed2002a}, which holds even if some components give spherical waves. We consider the conventional block fading model, where the channel $\vect{h}$ is constant within a time-frequency block and takes independent realization across blocks from a stationary stochastic distribution.
In accordance with \cite{Sayeed2002a}, we consider a correlated Rayleigh fading channel given as
\begin{equation} \label{eq:corr-Rayleigh}
\vect{h} \sim \CN(\vect{0},\beta\vect{R}),
\end{equation}
where $\beta$ denotes the average channel gain (i.e., capturing pathloss and shadowing) and $\vect{R}\in \mathbb{C}^{M \times M}$ is the normalized spatial correlation matrix so that $\tr(\vect{R})=M$. 

 In this paper, we will optimize the pilot length based on the average SE obtained among all potential UE locations, which can be easily obtained in practice by collecting statistics at the BS throughout a long period. Such an idea of average SE was previously considered in \cite{hossain2017energy,lopez2021energy,Zappone2023tradeoff}. There is a certain distribution of UEs and they have a distribution of $\vect{R}$ matrices. Hence, an arbitrary UE channel is described using a normalized spatial correlation matrix, whose realization is unknown at the BS, but the statistics of the spatial correlation matrices is known.  

Let us focus on an arbitrary UE with the channel gain $\beta$, which is known at the BS, and unknown normalized spatial correlation matrix $\vect{R}$. Let $f(\varphi,\theta)$ represent the normalized \emph{spatial scattering function} \cite{Sayeed2002a}.  This function describes the angular multipath distribution and the directivity gain of the antennas, and it is normalized so that $\iint f(\varphi,\theta) d\theta d\varphi  = 1$ since it behaves as a joint probability density function with respect to azimuth and elevation angles. The normalized spatial correlation matrix $\vect{R}$ for the considered UE depends entirely on $f(\varphi,\theta)$ and the array response vectors, because it can be expressed as 
\begin{equation} \label{eq:spatial-correlation}
\vect{R} = \mathbb{E}\{ \vect{h} \vect{h}^{\Htran} \} =   \iint_{-\pi/2}^{\pi/2} f(\varphi,\theta) \vect{a}(\varphi,\theta) \vect{a}^{\Htran}(\varphi,\theta) d\theta d\varphi ,
\end{equation}
where $\tr(\vect{R}) = M$. Note that the waves only arrive from directions in front of the array; that is, $\varphi \in [-\frac{\pi}{2},\frac{\pi}{2}]$. From \cite[Lem.~1]{demir2022channel}, the $(m,l)$th entry of the spatial correlation matrix is given as
\begin{align}  \label{eq:spatial-correlation2}
\left[\vect{R}\right]_{m,l} = &
 \iint_{-\pi/2}^{\pi/2} f(\varphi,\theta)  e^{\imagunit2\pi \left( d_{{\rm H}}^{ml}\sin(\varphi)\cos(\theta) + d_{{\rm V}}^{ml}\sin(\theta) \right)} d\theta d\varphi ,
\end{align}
where the horizontal and vertical distances between antenna $m$ and $l$ in numbers of the wavelength are given by
\begin{align}
d_{{\rm H}}^{ml} = \frac{ \left( i(m) - i(l) \right) \Delta}{\lambda}, \quad d_{{\rm V}}^{ml} = \frac{ \left( j(m) - j(l) \right) \Delta}{\lambda}.
\end{align}

The double integral described in \eqref{eq:spatial-correlation2} can be computed numerically for various spatial scattering functions, including those describing spherical waves. However, for some functions, closed-form expressions are also possible. One such case is in an ``isotropic scattering environment,'' where multipath components exhibit equal strength in all directions, and the antennas are isotropic, denoted by $f(\varphi,\theta) = \cos(\theta) / (2\pi)$. Here, the cosine term arises from the use of spherical coordinates. We denote the resulting normalized correlation matrix as $\vect{R}_{\rm iso}$, and the $(m,l)$th entry is~\cite{demir2022channel}:
\begin{equation}\label{R-iso}
\left[\vect{R}_{\rm iso} \right]_{m,l} = \sinc \left( 2 \sqrt{\left(d_{\rm H}^{ml}\right)^2+\left(d_{\rm V}^{ml}\right)^2}\right),
\end{equation}
where $\sinc(x) = \sin(\pi x)/ (\pi x)$ represents the sinc function. The expression in \eqref{R-iso} reveals that when two antennas are spaced apart by an integer multiple of $\lambda/2$, they will experience uncorrelated fading. However, we stress that this condition cannot be satisfied for all pairs of antennas in a UPA \cite{demir2022channel}. Hence, such an array will always exhibit spatially correlated fading. This correlation is not caused by correlation in the scattering environment but by the array geometry.

\section{Channel Estimation} 

The BS must estimate $\vect{h}$ in each time-frequency coherence block to perform coherent combining using all the $M$ antennas. 
We let $\tau_c$ denote the number of channel uses per coherence block, and let the UE send a predefined pilot sequence that spans $\tau_p<\tau_c$ channel uses. From \cite[Sec.~3]{massivemimobook}, the received signal at the BS during $\tau_{\rm p}$ pilot channel uses  is\footnote{When multiple (up to $\tau_p$) UEs send orthogonal pilots, the channel estimation is done separately by despreading the received signal at the BS. } 
\begin{align} \label{eq:y}
    \vect{Y}^{\rm pilot} = \sqrt{\rho}\vect{h}\boldsymbol{\phi}^{\Ttran} + \vect{N}^{\rm pilot},
\end{align}
where $\rho>0$ is the uplink SNR and $\vect{N}^{\rm pilot}\in\mathbb{C}^{M\times \tau_{\rm p}}$ includes the independent and identically distributed $\CN(0,1)$ entries. The pilot vector $\boldsymbol{\phi}\in \mathbb{C}^{\tau_{\rm p}}$ satisfies $\boldsymbol{\phi}^{\Ttran}\boldsymbol{\phi}^*=\|\boldsymbol{\phi}\|^2=\tau_{\rm p}$. The BS multiplies the received signal in \eqref{eq:y} by the unit-norm vector $\boldsymbol{\phi}^*/\|\boldsymbol{\phi}\|$ and obtains the sufficient statistics for estimating $\vect{h}$:
\begin{align}
 \vect{y}^{\rm pilot}=\vect{Y}^{\rm pilot} \frac{\boldsymbol{\phi}^*}{\|\boldsymbol{\phi}\|} = \sqrt{\rho\tau_{\rm p}}\vect{h}+\vect{n}^{\rm pilot},
\end{align}
where $\vect{n}^{\rm pilot}=\vect{N}^{\rm pilot}\boldsymbol{\phi}^*/\|\boldsymbol{\phi}\|\sim \CN(\vect{0},\vect{I}_M)$.
In this paper, we will use the RS-LS channel estimation framework\cite{demir2022channel}, which outperforms the conventional LS estimator significantly without any knowledge of the spatial correlation matrix. We will derive an achievable lower bound on the average SE, which only depends on the statistics of the $\vect{R}$-matrices among the population of possible UEs. Later, we will optimize the pilot length $\tau_p$ to maximize the average SE for a given UE channel gain $\beta$.

We let $1\leq r \leq M$ denote the rank of $\vect{R}_{\rm iso}$, i.e., $\rank\left(\vect{R}_{\rm iso}\right)=r$. The \emph{compact} eigendecomposition is denoted as $\vect{R}_{\rm iso}=\vect{U}_1\vect{\Lambda}_1\vect{U}_1^{\Htran}$, where the diagonal matrix $\vect{\Lambda}_1 \in \mathbb{C}^{r \times r}$ contains the non-zero eigenvalues and the columns of $\vect{U}_1\in \mathbb{C}^{M \times r}$ contains the corresponding orthonormal eigenvectors.\footnote{Using the compact eigendecomposition is the essence of obtaining reduced-subspace representation and corresponding RS-LS channel estimator.} 
It was proved in \cite{demir2022channel} that any spatial correlation matrix of the form in \eqref{eq:spatial-correlation2} is spanned by the columns of $\vect{U}_1$. Hence, any UE's channel vector can be expressed as $\vect{h}=\vect{U}_1\vect{x}$ for a zero-mean complex Gaussian vector $\vect{x}\in \mathbb{C}^{r}$. The reduced-subspace $\vect{U}_1$ is common for any potential UE, however, the covariance matrix of $\vect{x}$ is UE-dependent and unknown. The so-called \emph{RS-LS} estimate of $\vect{h}$ is
\begin{align} \label{eq:RS-LS-estimate-approx}
    \widehat{\vect{h}}= \frac{\vect{U}_1\vect{U}_1^{\Htran}\vect{y}^{\rm pilot}}{\sqrt{\rho\tau_{\rm p}}} = { \underbrace{\vect{U}_1\vect{U}_1^{\Htran}\vect{U}_1\vect{x}}_{=\vect{h}}} + \underbrace{\frac{\vect{U}_1\vect{U}_1^{\Htran}\vect{n}^{\rm pilot}}{\sqrt{\rho\tau_{\rm p}}}}_{\triangleq \vect{w}},
\end{align}
where the channel estimation error is $\vect{w} \sim \CN(\vect{0},\vect{C})$ with
\begin{align} \label{eq:estimation-error-C}
\vect{C} &= \mathbb{E}\{\vect{w}\vect{w}^{\Htran} \} = \frac{\vect{U}_1\vect{U}_1^{\Htran}\mathbb{E}\left\{\vect{n}^{\rm pilot}\left(\vect{n}^{\rm pilot}\right)^{\Htran}\right\}\vect{U}_1\vect{U}_1^{\Htran}}{\rho\tau_{\rm p}} \nonumber\\
&= \frac{\vect{U}_1\vect{U}_1^{\Htran}}{\rho \tau_{\rm p}}
\end{align}
and $\tr\left(\vect{C}\right)=r/(\rho\tau_{\rm p})$. The channel estimate and estimation error are not uncorrelated since the RS-LS estimator is not the MMSE estimator. This will make the SE analysis different than when considering the conventional MMSE estimator.

 \section{Average Uplink Spectral Efficiency and Pilot Length Optimization}
 
 The received signal during uplink data transmission is  
\begin{align}
    \vect{y} = \sqrt{\rho}\vect{h}s+\vect{n},
\end{align}
where $s\in\mathbb{C}$ is the information signal of the UE with $\mathbb{E}\{|s|^2\}=1$ and $\vect{n}\sim \CN(\vect{0},\vect{I}_M)$ is the independent receiver noise. The BS applies maximum-ratio combining, thus it multiplies the received signal with the estimate $\widehat{\vect{h}}^{\Htran}$ to obtain the effective single-input single-output (SISO) channel
\begin{align}
y = \widehat{\vect{h}}^{\Htran}\vect{y} = \sqrt{\rho}\left(\vect{h}+\vect{w}\right)^{\Htran}\vect{h}s+\left(\vect{h}+\vect{w}\right)^{\Htran}\vect{n}.
\end{align}
The average channel gain $\mathbb{E}\{\vect{h}^{\Htran}\vect{h}\}=\beta\tr(\vect{R})=M\beta$ among all antennas is assumed known at the BS since it is easy to estimate such a scalar. Here, $\beta>0$ denotes the average channel gain. Hence, we can write the above SISO channel as
 \begin{align}
    y= & \sqrt{\rho}\mathbb{E}\{\vect{h}^{\Htran}\vect{h}\}s + \sqrt{\rho}\left(\vect{h}^{\Htran}\vect{h}-  \mathbb{E}\{\vect{h}^{\Htran}\vect{h}\}\right)s \nonumber\\
&+\sqrt{\rho}\vect{w}^{\Htran}\vect{h}s+\left(\vect{h}+\vect{w}\right)^{\Htran}\vect{n} \nonumber \\
=& \underbrace{\sqrt{\rho}M\beta}_{\triangleq h} s+ \underbrace{\sqrt{\rho}(\vect{h}^{\Htran}\vect{h}+\vect{w}^{\Htran}\vect{h}-M\beta)s+\left(\vect{h}+\vect{w}\right)^{\Htran}\vect{n}}_{\triangleq \upsilon},
 \end{align}
 where $h$ is the known channel at the BS and $\upsilon$ is the effective noise. This channel representation is equivalent to a discrete memoryless channel with the input $s$,  output $y$, and uncorrelated noise. Utilizing \cite[Cor.~1.3]{massivemimobook}, the following lemma provides an achievable SE for this channel.

\begin{lemma} \label{lemma:infinite} 
An achievable SE of a particular UE with the channel $\vect{h}\sim \CN(\vect{0},\beta\vect{R})$ with RS-LS channel estimation is
\begin{align} \label{eq:SE}
   & \mathrm{SE} = \frac{\tau_c-\tau_p}{\tau_c}\times \nonumber \\&\log_2\left(1+\frac{\rho M^2 \beta^2 }{\rho\beta^2\tr\left(\vect{R}^2\right) +M\beta + \frac{1}{\tau_p} \left( M\beta+\frac{r}{\rho}\right)}\right) .
\end{align}
\end{lemma}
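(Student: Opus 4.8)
The plan is to apply the standard capacity lower bound for a deterministic-gain channel with uncorrelated additive noise, namely \cite[Cor.~1.3]{massivemimobook}, to the effective SISO channel $y=hs+\upsilon$ derived just above the statement. That bound certifies that $\frac{\tau_c-\tau_p}{\tau_c}\log_2\!\left(1+|h|^2/\mathbb{E}\{|\upsilon|^2\}\right)$ is achievable, provided $h$ is known at the receiver, $\mathbb{E}\{|s|^2\}=1$, and the effective noise $\upsilon$ is uncorrelated with $s$. Since $h=\sqrt{\rho}M\beta$ is deterministic, the numerator is immediately $|h|^2=\rho M^2\beta^2$, which already matches the target. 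The whole task therefore reduces to (i) confirming $\mathbb{E}\{\upsilon s^*\}=0$ and (ii) evaluating $\mathbb{E}\{|\upsilon|^2\}$ and checking that it equals the claimed denominator.

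For step (i), I would exploit that $s$, the pilot-phase error $\vect{w}$, and the data-phase noise $\vect{n}$ are mutually independent, and—crucially, in contrast to the MMSE case—that $\vect{w}$ is independent of $\vect{h}$, since $\vect{w}$ is a linear function of $\vect{n}^{\rm pilot}$, which is independent of the channel. Taking expectations in the signal-dependent part of $\upsilon$ gives $\mathbb{E}\{\vect{h}^{\Htran}\vect{h}\}=\tr(\beta\vect{R})=M\beta$ and $\mathbb{E}\{\vect{w}^{\Htran}\vect{h}\}=0$, so the bracket $\vect{h}^{\Htran}\vect{h}+\vect{w}^{\Htran}\vect{h}-M\beta$ is zero-mean; the remaining part of $\upsilon$ is linear in the zero-mean $\vect{n}$, hence uncorrelated with $s$ as well.

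For step (ii), I would write $\upsilon=A+B$ with $A=\sqrt{\rho}(\vect{h}^{\Htran}\vect{h}+\vect{w}^{\Htran}\vect{h}-M\beta)s$ and $B=(\vect{h}+\vect{w})^{\Htran}\vect{n}$. The cross term $\mathbb{E}\{AB^*\}$ vanishes because $B$ is linear in the zero-mean, independent $\vect{n}$, so $\mathbb{E}\{|\upsilon|^2\}=\mathbb{E}\{|A|^2\}+\mathbb{E}\{|B|^2\}$. For $B$, conditioning on $(\vect{h},\vect{w})$ and using $\mathbb{E}\{\vect{n}\vect{n}^{\Htran}\}=\vect{I}_M$ yields $\mathbb{E}\{|B|^2\}=\mathbb{E}\{\|\vect{h}+\vect{w}\|^2\}=M\beta+\tr(\vect{C})=M\beta+r/(\rho\tau_p)$, where the cross term $\mathbb{E}\{\vect{h}^{\Htran}\vect{w}\}$ drops out by independence. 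For $A$, the bracket is zero-mean and the cross term $\mathbb{E}\{(\vect{h}^{\Htran}\vect{h}-M\beta)\vect{h}^{\Htran}\vect{w}\}$ again vanishes (condition on $\vect{h}$, use $\mathbb{E}\{\vect{w}\}=\vect{0}$), so I expect $\mathbb{E}\{|A|^2\}=\rho\big(\mathrm{Var}(\|\vect{h}\|^2)+\mathbb{E}\{|\vect{w}^{\Htran}\vect{h}|^2\}\big)$.

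The main obstacle is evaluating these two quadratic-form expectations. The first is the variance of $\|\vect{h}\|^2$ for $\vect{h}\sim\CN(\vect{0},\beta\vect{R})$; the standard fourth-moment identity for circularly symmetric Gaussians gives $\mathrm{Var}(\|\vect{h}\|^2)=\tr((\beta\vect{R})^2)=\beta^2\tr(\vect{R}^2)$, supplying the $\rho\beta^2\tr(\vect{R}^2)$ term. The second is $\mathbb{E}\{|\vect{w}^{\Htran}\vect{h}|^2\}=\mathbb{E}\{\vect{w}^{\Htran}\beta\vect{R}\vect{w}\}=\beta\tr(\vect{R}\vect{C})$; here the key step is to invoke the subspace property proved in \cite{demir2022channel}, namely that $\vect{R}$ lies in the range of $\vect{U}_1$, so that $\vect{R}\vect{U}_1\vect{U}_1^{\Htran}=\vect{R}$ and hence $\tr(\vect{R}\vect{C})=\tr(\vect{R}\vect{U}_1\vect{U}_1^{\Htran})/(\rho\tau_p)=\tr(\vect{R})/(\rho\tau_p)=M/(\rho\tau_p)$. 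Collecting the pieces gives $\mathbb{E}\{|A|^2\}=\rho\beta^2\tr(\vect{R}^2)+\beta M/\tau_p$, and adding $\mathbb{E}\{|B|^2\}$ reproduces exactly the denominator $\rho\beta^2\tr(\vect{R}^2)+M\beta+\frac{1}{\tau_p}(M\beta+r/\rho)$, which completes the proof.
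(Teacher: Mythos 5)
Your proposal is correct and follows essentially the same route as the paper's proof: invoke \cite[Cor.~1.3]{massivemimobook} for the effective SISO channel, verify $\mathbb{E}\{s^*\upsilon\}=0$ via independence, and compute $\mathbb{E}\{|\upsilon|^2\}$ using the Gaussian fourth-moment identity (your variance formulation is just Lem.~B.14 with the $(\tr(\vect{R}))^2$ term pre-cancelled), the independence of $\vect{w}$ and $\vect{h}$, and the subspace property $\tr(\vect{R}\vect{U}_1\vect{U}_1^{\Htran})=\tr(\vect{R})=M$. All steps check out and reproduce the claimed denominator exactly.
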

\vspace{-2mm}
\begin{proof}
    The proof is provided in Appendix~\ref{appendix1}.
\end{proof}

 \begin{figure*}[h]
\begin{align} \label{eq:average1}
   \mathrm{SE}_{\m avg}= \mathbb{E}_{\vect{R}}\{\mathrm{SE}\} &=\frac{\tau_c-\tau_p}{\tau_c} 
    \mathbb{E}_{\vect{R}}\left\{ \log_2\left(1+\frac{\rho M^2 \beta^2 }{\rho\beta^2\tr\left(\vect{R}^2\right) +M\beta + \frac{1}{\tau_p} \left( M\beta+\frac{r}{\rho}\right)}\right)\right\} \\ \label{eq:average2}
    &\geq \frac{\tau_c-\tau_p}{\tau_c} 
   \log_2\left(1+\frac{\rho M^2  }{\rho\mathbb{E}\left\{\tr\left(\vect{R}^2\right)\right\} +\frac{M}{\beta} + \frac{1}{\tau_p}  \left(\frac{M}{\beta}+\frac{r}{\rho\beta^2}\right)}\right)
\end{align}
\hrulefill
\vspace{-6mm}
\end{figure*}

Even if the RS-LS channel estimator does not require knowledge of $\vect{R}$, the SE expression in Lemma~\ref{lemma:infinite} depends on it through the scalar $\tr(\vect{R}^2)$.
Now, we will derive a lower bound on the average SE that only depends on $\beta$ and the statistics of the normalized spatial correlation matrix for the entire population of UEs with the same $\beta$.

The average SE with RS-LS channel estimation is computed in \eqref{eq:average1} at the top of the next page. We have used the subscript $\vect{R}$ to emphasize that the expectation is taken over all possible normalized spatial correlation matrices $\vect{R}$ in the user population. Note that the channel gain $\beta$ and all the other parameters are treated as fixed, except for $\tau_p$, which we will optimize in the following part. To make the optimization manageable, the lower bound in  \eqref{eq:average2} on the average SE is obtained by applying Jensen's inequality to the convex function $\log_2(1+\frac{a}{b+cx})$ of $x$ for some positive constants $a$, $b$, and $c$, respectively.\footnote{It holds that $\mathbb{E}\{\log_2(1+\frac{a}{b+cx})\}\geq \log_2(1+\frac{a}{b+c\mathbb{E}\{x\}} )$ from Jensen's inequality.} Now, the average SE can be written as a function of $\tau_p$ as
\begin{align} \label{eq:SE-avg-taup}
    \mathrm{SE}_{\rm avg}(\tau_p) = \frac{\tau_c-\tau_p}{\tau_c}\log_2\left(1+\frac{1}{\mathsf{A}+\frac{\mathsf{B}}{\tau_p}}\right),
\end{align}
where the constants $\mathsf{A}$ and $\mathsf{B}$ that are independent of $\tau_p$ are
\begin{align}
&\mathsf{A} = \frac{1}{M^2}\mathbb{E}\left\{\tr\left(\vect{R}^2\right)\right\} +\frac{1}{\rho M\beta}, \\
&\mathsf{B} = \frac{1}{\rho M^2} \left(\frac{M}{\beta}+\frac{r}{\rho\beta^2}\right).
\end{align}

 \begin{lemma}\label{lemma:concave}
 $\mathrm{SE}_{\rm avg}(\tau_p)$ in \eqref{eq:SE-avg-taup} is a strictly concave function of $\tau_p$, hence, it has a unique global maximum $\tau_p^{\rm opt}$. Moreover, it holds that $0<\tau_p^{\rm opt}<\tau_c$.
 \end{lemma}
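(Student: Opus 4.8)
The plan is to treat $\tau_p$ as a continuous variable ranging over $(0,\tau_c)$ and establish strict concavity by showing the second derivative is negative throughout; existence, uniqueness, and the interior location of the maximizer then follow from standard properties of strictly concave functions together with boundary evaluations.

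First I would simplify the argument of the logarithm. Writing $\frac{1}{\mathsf{A}+\mathsf{B}/\tau_p} = \frac{\tau_p}{\mathsf{A}\tau_p + \mathsf{B}}$, the term inside the log becomes $\frac{(\mathsf{A}+1)\tau_p + \mathsf{B}}{\mathsf{A}\tau_p + \mathsf{B}}$. Since $\log_2$ differs from $\ln$ only by the positive factor $1/\ln 2$, and the prefactor $1/\tau_c$ is a positive constant, concavity is unaffected by dropping them, so I would study $G(\tau_p) = (\tau_c - \tau_p)\,\ell(\tau_p)$, where $\ell(\tau_p) = \ln\big((\mathsf{A}+1)\tau_p + \mathsf{B}\big) - \ln\big(\mathsf{A}\tau_p + \mathsf{B}\big)$.

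Next I would analyze $\ell$. Setting $P = (\mathsf{A}+1)\tau_p + \mathsf{B}$ and $Q = \mathsf{A}\tau_p + \mathsf{B}$ (both positive, with $P>Q$ for $\tau_p>0$), a direct differentiation using the cancellation $(\mathsf{A}+1)Q - \mathsf{A}P = \mathsf{B}$ yields the clean form $\ell'(\tau_p) = \mathsf{B}/(PQ) > 0$, so $\ell$ is strictly increasing with $\ell(0)=0$. Differentiating once more gives $\ell''(\tau_p) = -\mathsf{B}\,(PQ)'/(PQ)^2$ with $(PQ)' = (\mathsf{A}+1)Q + \mathsf{A}P > 0$, hence $\ell''(\tau_p) < 0$, so $\ell$ is strictly concave as well as increasing. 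Applying the product rule to $G$ then gives $G''(\tau_p) = -2\ell'(\tau_p) + (\tau_c-\tau_p)\ell''(\tau_p)$. For $\tau_p\in(0,\tau_c)$ the first term is negative (since $\ell'>0$) and the second is negative (since $\tau_c-\tau_p>0$ and $\ell''<0$), so $G''<0$ and $\mathrm{SE}_{\rm avg}$ is strictly concave.

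Finally, for the location of the optimum I would use the boundary behavior: $\ell(0)=0$ gives $G(0^+)=0$, the prefactor $\tau_c-\tau_p$ gives $G(\tau_c)=0$, and $G>0$ strictly inside since both factors are positive there. A continuous strictly concave function vanishing at both endpoints of $[0,\tau_c]$ and positive in between attains its maximum at a unique interior point; equivalently, $G'(0^+) = \tau_c\,\ell'(0) > 0$ while $G'(\tau_c^-) = -\ell(\tau_c) < 0$, and strict concavity makes $G'$ strictly decreasing with exactly one sign change. This pins down a unique $\tau_p^{\rm opt}$ with $0<\tau_p^{\rm opt}<\tau_c$. The one genuinely delicate step is establishing $\ell''<0$: the sign of the product's second derivative is not obvious a priori, and the argument hinges entirely on the algebraic cancellation that reduces $\ell'$ to the transparent single-term expression $\mathsf{B}/(PQ)$, from which both the monotonicity of $\ell$ and the sign of $\ell''$ read off immediately.
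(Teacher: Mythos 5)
Your proof is correct and follows essentially the same route as the paper's: establish strict concavity via the sign of the second derivative and then use the boundary values $\mathrm{SE}_{\rm avg}(0)=\mathrm{SE}_{\rm avg}(\tau_c)=0$ to locate the unique maximizer in the interior. The only difference is presentational: the paper displays the fully expanded second derivative and observes its numerator and denominator signs directly, whereas you factor the objective as $(\tau_c-\tau_p)\ell(\tau_p)$ and show $G''=-2\ell'+(\tau_c-\tau_p)\ell''$ is a sum of two negative terms via the cancellation $\ell'=\mathsf{B}/(PQ)$, which is a cleaner way to see the same sign.
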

 \begin{proof}
    The proof is provided in the Appendix~\ref{appendix2}.
\end{proof}

When the SNR is so high that $\mathsf{B}$ has a much smaller value compared to $\mathsf{A}$ in \eqref{eq:SE-avg-taup}, then the optimal $\tau_p$ is expected to be a small value, possibly $\tau_p\approx 1$. On the other hand, when the SNR is small, we expect that a larger number of pilots is needed to maximize the average SE. In the low-SNR regime, we can approximate the average SE in \eqref{eq:SE-avg-taup} as
\begin{align}
      \mathrm{SE}_{\rm avg}(\tau_p)  \approx  \log_2(e)\frac{\tau_c-\tau_p}{\tau_c}\frac{1}{\mathsf{A}+\frac{\mathsf{B}}{\tau_p}}, \label{eq:approximation}
\end{align}
where we used $\log_2(1+x)\approx \log_2(e)x$ for $x\approx 0$. Taking the derivative of the above expression with respect to $\tau_p$ and equating it to zero, we can compute the optimal pilot length in the low-SNR regime as
\begin{align}
    \tau_p^{\star} \approx \frac{\sqrt{\mathsf{B}\left(\mathsf{B}+\mathsf{A}\tau_c\right)}-\mathsf{B}}{\mathsf{A}}. \label{eq:approximate-solution}
\end{align}
The optimal integer-valued pilot length is one of the two closest integers to $\tau_p^{\star} $ that gives the largest average SE.

\section{Numerical Results}

In this section, we will numerically analyze the impact of the number of antennas and channel gain on the average SE. Moreover, the SE achieved with the optimal pilot lengths obtained by the exact expression in \eqref{eq:average1}, the lower bound in \eqref{eq:average2}, and the approximate solution in \eqref{eq:approximate-solution} will be compared.

The local scattering model presented in closed-form in \cite[Lem.~1]{demir2022channel} is utilized to model the spatial correlation. There are $N=5$ clusters with uniformly randomly distributed powers in $[0,1]$ that are normalized by the sum of all cluster powers. For each cluster, the nominal azimuth and elevation angles are both generated randomly in $[-\pi/3,\pi/3]$ following a uniform distribution. The per-cluster angular standard deviations are $\sigma_{\varphi} = \sigma_{\theta} = 5^{\circ}$ and each antenna in the array is a directive antenna with the cosine pattern $\cos(\varphi)\cos(\theta)$. The pilot transmit power is $100$\,mW and the noise variance is $-94$\,dBm corresponding to $20$\,MHz bandwidth and a noise figure of $7$\,dB. The number of channel uses in a coherence block is $\tau_c=200$.

In Fig.~\ref{fig:fig2}, we plot the exact average UE expression in \eqref{eq:average1} and the lower bound in \eqref{eq:average2} in terms of the pilot length $\tau_p$. The channel gain $\beta$ is selected to have the SNR $\beta\rho=-20$\,dB. The results are averaged over many realizations of the channel statistics. We compare two antenna arrays with the same aperture: i) $M_{\rm H}=M_{\rm V}=12$ with $\Delta=\lambda/4$ and ii) $M_{\rm H}=M_{\rm V}=24$ with $\Delta=\lambda/8$. The approximate solution in \eqref{eq:approximate-solution}, which is rounded to the closest integer that gives the highest average SE, is demonstrated by a star for each case. As the figure shows, the lower bound matches well with the exact expression. Moreover, there is only one $\tau_p$ that maximizes the average SE, which verifies Lemma~\ref{lemma:concave}. Another important observation is that the optimal $\tau_p$ is smaller when having more densely deployed antennas, which corresponds to increased spatial correlation. Consequently, this results in improved noise rejection capabilities when employing the RS-LS channel estimator due to the reduced dimension of the subspace that all possible channel realizations span.

In Fig.~\ref{fig:fig3}, we plot the cumulative distribution function (CDF) of the SE when $M_{\rm H}=M_{\rm V}=24$ and $\Delta=\lambda/8$ (as in the previous figure) using different pilot lengths. In addition to the optimal $\tau_p$ values that are obtained by maximizing the exact expression, lower bound, and low-SNR approximation of the average SE, $\tau_p=10$ is also shown as a reference, which represents an arbitrary value of $\tau_p$. Optimizing $\tau_p$ using the exact expression or the lower bound provides almost always a higher SE. On the other hand, there is a small SE reduction when using the low-SNR approximation. Much smaller SE values are obtained when using an arbitrary pilot length, which is $\tau_p=10$ in this figure. 

In Figs.~\ref{fig:fig4} and ~\ref{fig:fig5}, we increase the SNR to $-10$\,dB and repeat the experiment in Figs.~\ref{fig:fig2} and ~\ref{fig:fig3}. As expected, higher SE values are obtained due to an increase in the SNR, and fewer pilot symbols are required to maximize the average SE. Hence, the reference pilot length in Fig.~\ref{fig:fig5} is selected as $\tau_p=1$. This time, as shown in Fig.~\ref{fig:fig4}, a slight gap between the lower bound and the exact expression is observed. However, it is negligible, and optimizing $\tau_p$ based on the lower bound and exact expression almost give the same CDF of the SE as in Fig.~\ref{fig:fig5}. Moreover, using the low-SNR approximation provides very close performance. 

 \begin{figure}[t!] 
\centering
	\includegraphics[trim={0cm 0.cm 0cm 1cm},clip,width=3.4in]{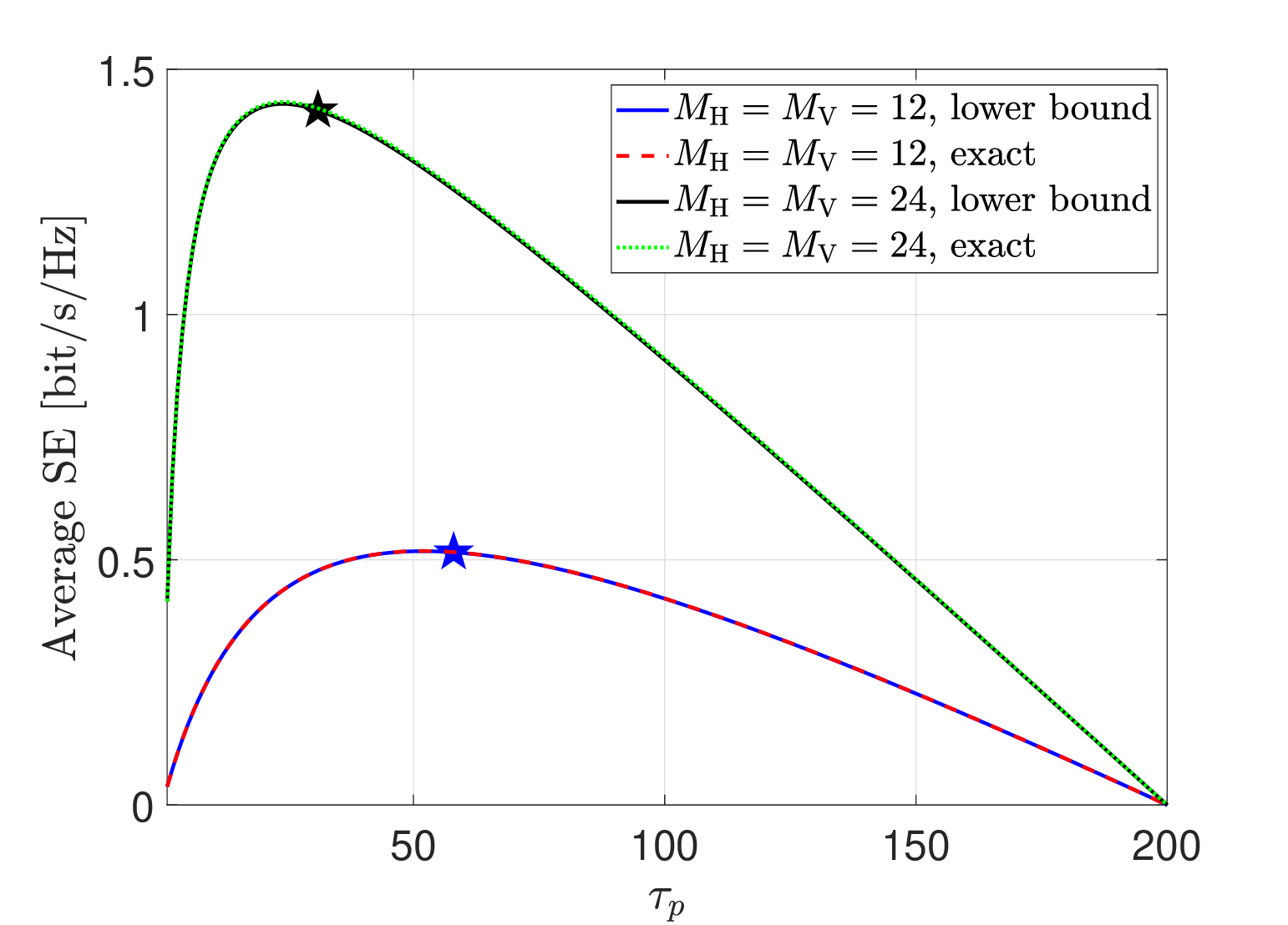}
			\vspace{-0.4cm}
			\caption{The exact expression and proposed lower bound on the average SE versus the pilot length for different numbers of antennas and $-20$\,dB SNR. The inter-antenna separation is $\Delta=\lambda/4$ when $M_{\rm H}=M_{\rm V}=12$ and $\Delta=\lambda/8$ when  $M_{\rm H}=M_{\rm V}=24$, respectively.} \label{fig:fig2} \vspace{-4mm}
\end{figure}

 \begin{figure}[t!] 
\centering
	\includegraphics[trim={0cm 0.cm 0cm 1cm},clip,width=3.4in]{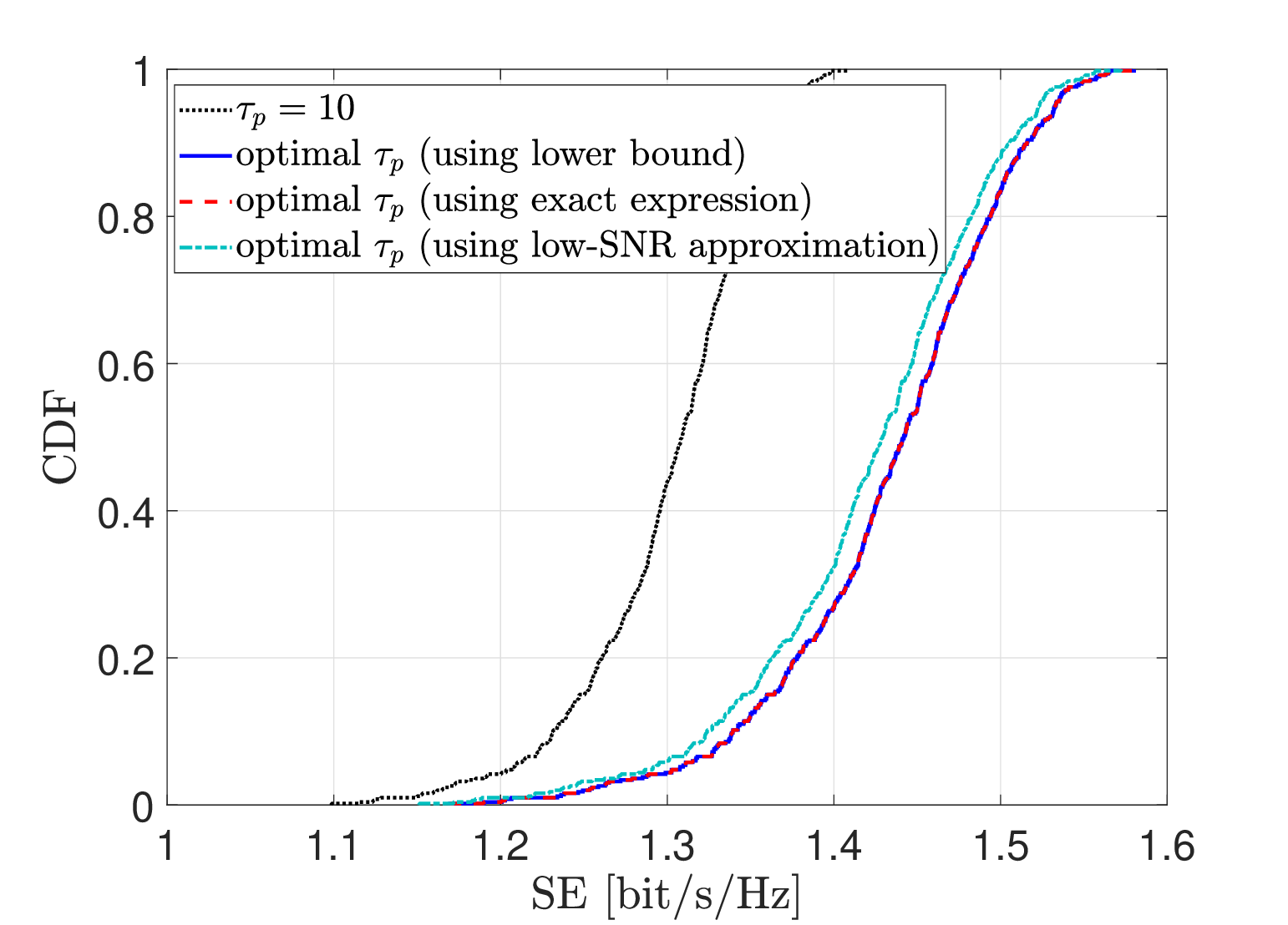}
			\vspace{-0.4cm}
			\caption{CDFs of the SE when $M_{\rm H}=M_{\rm V}=24$, $\Delta=\lambda/8$, and $-20$\,dB SNR using different pilot lengths. } \label{fig:fig3} \vspace{-4mm}
\end{figure}

 \begin{figure}[t!] 
\centering
	\includegraphics[trim={0cm 0.cm 0cm 1cm},clip,width=3.4in]{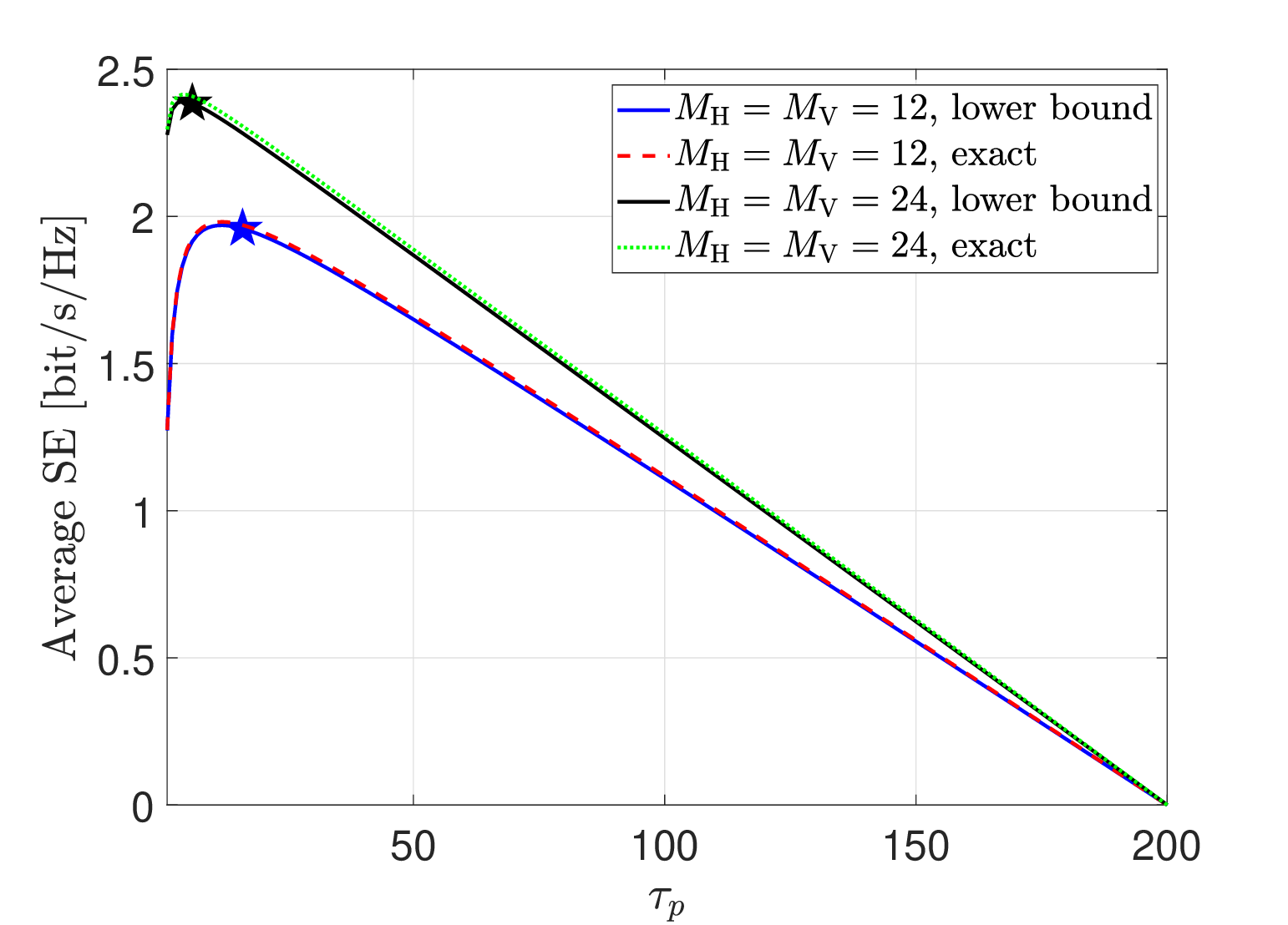}
			\vspace{-0.4cm}
			\caption{Exact expression and proposed lower bound to the average SE versus the pilot length for different numbers of antennas and $-10$\,dB SNR. The inter-antenna separation is $\Delta=\lambda/4$ when $M_{\rm H}=M_{\rm V}=12$ and $\Delta=\lambda/8$ when  $M_{\rm H}=M_{\rm V}=24$, respectively.} \label{fig:fig4} \vspace{-4mm}
\end{figure}

 \begin{figure}[t!] 
\centering
	\includegraphics[trim={0cm 0.cm 0cm 1cm},clip,width=3.4in]{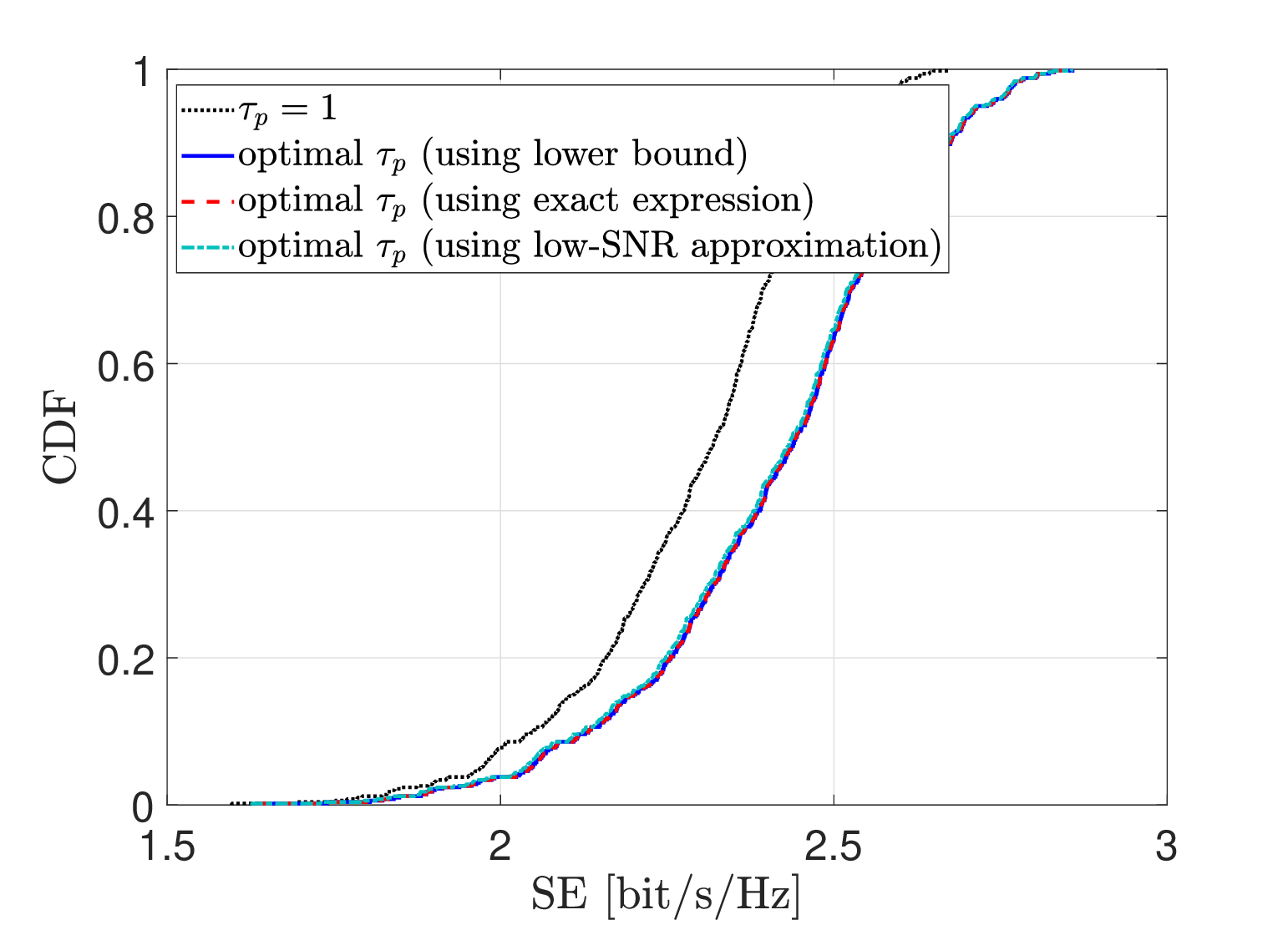}
			\vspace{-0.4cm}
			\caption{CDFs of the SE when $M_{\rm H}=M_{\rm V}=24$, $\Delta=\lambda/8$, and $-10$\,dB SNR using different pilot lengths.} \label{fig:fig5} \vspace{-4mm}
\end{figure}

 \begin{figure*}[t]
\setcounter{equation}{22}

\begin{align} \label{eq:long}
\mathbb{E}\left\{|\upsilon|^2\right\}\stackrel{(a)}=&\rho \mathbb{E}\left\{ \left| \vect{h}^{\Htran}\vect{h} + \vect{w}^{\Htran}\vect{h}-M\beta\right|^2 \right\} \mathbb{E}\{|s|^2\} +\mathbb{E}\left\{  \left(\vect{h}+\vect{w}\right) ^{\Htran}\vect{n}\vect{n}^{\Htran}\left(\vect{h}+\vect{w}\right)  \right\} \nonumber \\
\stackrel{(b)}=&\rho\mathbb{E}\left\{\vect{h}^{\Htran}\vect{h}\vect{h}^{\Htran}\vect{h}\right\}+\rho\mathbb{E}\left\{\vect{w}^{\Htran}\vect{h}\vect{h}^{\Htran}\vect{w}\right\}+\rho M^2\beta^2 +2\rho\Re\left(\underbrace{\mathbb{E}\left\{\vect{w}^{\Htran}\vect{h}\vect{h}^{\Htran}\vect{h}\right\}}_{=\mathbb{E}\left\{\vect{w}^{\Htran}\right\}\mathbb{E}\left\{\vect{h}\vect{h}^{\Htran}\vect{h}\right\}=0}\right)-2\rho M\beta\underbrace{\mathbb{E}\left\{\vect{h}^{\Htran}\vect{h}\right\}}_{=M\beta} \nonumber\\
&-2\rho M\beta\Re\left(\underbrace{\mathbb{E}\left\{\vect{w}^{\Htran}\vect{h}\right\}}_{=\mathbb{E}\left\{\vect{w}^{\Htran}\right\}\mathbb{E}\left\{\vect{h} \right\}=0}\right) +\mathbb{E}\left\{ \left(\vect{h}+\vect{w}\right)^{\Htran}\underbrace{\mathbb{E}\left\{\vect{n}\vect{n}^{\Htran}\right\}}_{=\vect{I}_M}\left(\vect{h}+\vect{w}\right)\right\} \nonumber \\
\stackrel{(c)}=& \rho\mathbb{E}\left\{\vect{h}^{\Htran}\vect{h}\vect{h}^{\Htran}\vect{h}\right\}+\rho\mathbb{E}\left\{  \vect{w}^{\Htran}\underbrace{\mathbb{E}\left\{\vect{h}\vect{h}^{\Htran}\right\}}_{=\beta\vect{R}}\vect{w}\right\}   -\rho M^2 \beta^2 + \underbrace{\mathbb{E}\left\{\vect{h}^{\Htran}\vect{h}\right\}}_{=M\beta}+ \mathbb{E}\left\{\vect{w}^{\Htran}\vect{w}\right\} \nonumber \\
\stackrel{(d)}=&\underbrace{\rho\beta^2\left(\tr\left(\vect{R}\right)\right)^2}_{=\rho M^2 \beta^2}+\rho\beta^2\tr\left(\vect{R}^2\right)+\rho\beta\tr\left(\mathbb{E}\left\{\vect{w}\vect{w}^{\Htran}\right\}\vect{R}\right)-\rho M^2\beta^2+M\beta+\tr\left(\mathbb{E}\left\{\vect{w}\vect{w}^{\Htran}\right\}\right) \nonumber\\
\stackrel{(e)}=&\rho\beta^2\tr\left(\vect{R}^2\right) +\rho\frac{\beta\tr\left(\vect{U}_1\vect{U}_1^{\Htran}\vect{R}\right)}{\rho \tau_{\rm p}}+M\beta+\frac{\tr\left(\vect{U}_1\vect{U}_1^{\Htran}\right)}{\rho\tau_p} \stackrel{(f)}= \rho\beta^2\tr\left(\vect{R}^2\right) +M\beta + \frac{1}{\tau_p} \left(M \beta+\frac{r}{\rho}\right)
\end{align}
\hrulefill
\vspace{-6mm}
 \end{figure*}

\section{Conclusions}

\vspace{-1mm}

In this paper, we have derived a novel SE expression applicable when using the practical RS-LS channel estimator for communication between an ELAA and a single-antenna UE. This estimator improves estimation quality without requiring UE-specific channel statistics.
Furthermore, we introduced an expression for the average SE, considering a certain UE distribution and, consequently, a distribution of the normalized spatial correlation matrices. To facilitate fast optimization of the pilot length, we also derived a lower bound on the average SE. We showed that there exists a unique pilot length that maximizes this lower bound. As the SNR decreases, the demand for additional pilot resources becomes apparent, leading us to derive a closed-form expression for an approximate pilot length that maximizes the SE under low-SNR conditions.

In scenarios with densely deployed antennas, the improved noise rejection capability of the RS-LS channel estimator reduces the requirement for pilot symbols. Our derived lower bound closely approximates optimal performance, and optimizing the pilot length based on it typically yields the best results. The optimal pilot length obtained from the low-SNR approximation offers near-optimal performance in most cases.

\vspace{-1mm}
\appendices

\section{Proof of Lemma~\ref{lemma:infinite}} \label{appendix1}
\vspace{-1mm}

The considered channel is a discrete memoryless channel with the input $s$ and output $y$ and uncorrelated noise, as in \cite[Cor.~1.3]{massivemimobook}. In addition, zero-mean $\upsilon$ and $s$ are uncorrelated as shown below:
\setcounter{equation}{20}
\begin{align}
\mathbb{E}\{s^*\upsilon\} &=\mathbb{E}\left\{|s|^2\sqrt{\rho}\left(\vect{h}^{\Htran}\vect{h}+\vect{w}^{\Htran}\vect{h}-M\beta\right)\right\}\nonumber\\
&\quad+\mathbb{E}\left\{s^*\left(\vect{h}+\vect{w}\right)^{\Htran}\vect{n}\right\}\nonumber\\
&=\mathbb{E}\left\{|s|^2\right\}\underbrace{\sqrt{\rho}\left(\mathbb{E}\left\{\vect{h}^{\Htran}\vect{h}\right\}+\mathbb{E}\left\{\vect{w}^{\Htran}\right\}\mathbb{E}\left\{\vect{h}\right\}-M\beta\right)}_{=0}\nonumber\\
&\quad+\underbrace{\mathbb{E}\{s^*\}}_{=0}\mathbb{E}\left\{\left(\vect{h}+\vect{w}\right)^{\Htran}\vect{n}\right\},
\end{align}
where we used that $s$, $\vect{h}$, $\vect{w}$, and $\vect{n}$ are mutually independent. It follows from \cite[Cor.~1.3]{massivemimobook} that an achievable SE is

\begin{align} \label{eq:SE-appendix}
    \mathrm{SE} = \frac{\tau_c-\tau_p}{\tau_c}\log_2\left(1+\frac{|h|^2}{\mathbb{E}\{|\upsilon|^2\}}\right) ,
\end{align}
where we have also included the pre-log factor $\frac{\tau_c-\tau_p}{\tau_c}$ to account for the fact that only $\tau_c-\tau_p$ symbols are used for data transmission in each coherence block. The denominator term is computed as shown in \eqref{eq:long} at the top of the next page.

In $(a)$ in \eqref{eq:long}, we used the independence of the zero-mean random variable $s$ from $\vect{h}$, $\vect{w}$, and $\vect{n}$. In $(b)$ and $(c)$, we used the independence of the zero-mean random vectors $\vect{w}$, $\vect{h}$, and $\vect{n}$, respectively. In $(d)$, the result $\mathbb{E}\left\{\vect{h}^{\Htran}\vect{h}\vect{h}^{\Htran}\vect{h}\right\}=\beta^2\left(\tr\left(\vect{R}\right)\right)^2+\beta^2\tr\left(\vect{R}^2\right)$ from \cite[Lem.~B.14]{massivemimobook} and the cyclic shift property of the trace are utilized. Finally, in $(e)$ and $(f)$, $\vect{C}$ in \eqref{eq:estimation-error-C} and $\tr(\vect{C})=r/(\rho\tau_{\rm p})$ are inserted, respectively. In $(f)$, it is also noted that $\tr\left(\vect{U}_1 \vect{U}_1^{\Htran}\vect{R}\right)=\tr\left(\vect{R}\right)=M$ since $\vect{R}$ is in the span of $\vect{U}_1$.       

Substituting $|h|^2=\rho M^2\beta^2$ and $\mathbb{E}\{|\upsilon|^2\}=\rho\beta^2\tr\left(\vect{R}^2\right) +M\beta + \frac{1}{\tau_p} (M\beta +\frac{r}{\rho})$ into \eqref{eq:SE-appendix}, the SE expression in \eqref{eq:SE} is obtained.

\vspace{-2mm}
\section{Proof of Lemma~\ref{lemma:concave}} \label{appendix2}

The proof follows from the second-order derivative of $\mathrm{SE}_{\rm avg}(\tau_p)$ in \eqref{eq:SE-avg-taup}, which is 
\setcounter{equation}{23}
    \begin{align}
    -\dfrac{\mathsf{B}\left(\left(\left(2\mathsf{A}^2+2\mathsf{A}\right)\tau_c+\left(2\mathsf{A}+1\right)\mathsf{B}\right)\tau_p+\left(2\mathsf{A}+1\right)\mathsf{B}\tau_c+2\mathsf{B}^2\right)}{\ln\left(2\right)\,\tau_c\left(\mathsf{A}\tau_p+\mathsf{B}\right)^2\left(\left(\mathsf{A}+1\right)\tau_p+\mathsf{B}\right)^2}
    \end{align}
    and is negative for any given $\mathsf{A}>0$ and $\mathsf{B}>0$. To see where the optimal solution lies, we notice that $\mathrm{SE}_{\rm avg}=0$ when $\tau_p=0$ and $\tau_p=\tau_c$. Since the average SE is a strictly concave and non-negative function, it must attain its maximum at some $0<\tau_p^{\rm opt}<\tau_c$.  

\bibliographystyle{IEEEtran}
\bibliography{IEEEabrv,refs}

\end{document}